\newtheorem{theorem}{Theorem}[section]
\newtheorem{lemma}[theorem]{Lemma}
\newtheorem{proposition}[theorem]{Proposition}
\newtheorem{corollary}[theorem]{Corollary}
\title{\LARGE \bf
Decoherence Control and Purification of Two-dimensional Quantum Density Matrices under Lindblad Dissipation}
\author{Patrick Rooney, Anthony Bloch and Chitra Rangan
\thanks{P.R. and A.B. are at the Department of Mathematics, University of
Michigan, Ann Arbor, MI 48109, dprooney@umich.edu (P.R.) and
abloch@umich.edu (A.B.). Research is partly supported by NSF.
}
\thanks{C.R. is at the Department of Physics, University of Windsor, ON,
N9B 3P4. Canada, rangan@uwindsor.ca. Research is supported by
NSERC, Canada.
}
}
\begin{document}

\maketitle

\begin{abstract}
Control of quantum dissipative systems can be challenging because control variables are typically part of the system Hamiltonian, which can only generate motion along unitary orbits of the system. To transit between orbits, one must harness the dissipation super-operator. To separate the inter-orbit dynamics from the Hamiltonian dynamics for a generic two-dimensional system, we project the Lindblad master equation onto the set of spectra of the density matrix, and we interpret the location along the orbit to be a new control variable. The resulting differential equation allows us to analyze the controllability of a general two-dimensional Lindblad system, particularly systems where the dissipative term has an anti-symmetric part. We extend this to categorize the possible purifiable systems in two dimensions. 
\end{abstract}

\section{Introduction}

Recent decades have seen the application of mathematical control theory to quantum systems in both physics and chemistry, as technological advances have allowed for greater precision in manipulation of these systems \cite{huang}\cite{shapiro}\cite{tannor1}\cite{tannor2}\cite{ramakrishnaetal}. One particular area of interest is the possible construction of quantum computers, which have the power to perform algorithms not accessible to conventional computers. A major experimental obstacle to any implementation of such a computer, however, is the decoherence of the system under influence of the environment. While much progress has been made on the control of closed quantum systems \cite{dalessandro}\cite{albertini}\cite{rouchon}, work on open quantum systems has proved more challenging
 \cite{altafini2}\cite{altafini}\cite{sugny}\cite{schirmer}\cite{dirr}.  

One important issue is that controls are nearly always in the form of Hamiltonian operators. This type of control is unable to directly affect the purity of a state \cite{tannorbartana} or transfer the state between unitary orbits. To control purity, one must use the dissipative dynamics to move between orbits. To this end, we wish to derive a differential equation that captures only the inter-orbit dynamics, and collects the remaining dynamics (along the orbits) into a new control variable. This can be done if we assume arbitrary control over the Hamiltonian dynamics. The resulting differential equation can tell us how the location along the orbit affects the motion between orbits. In this paper, we show that this can be done for two-dimensional systems subject to Lindblad dissipation, and the formalism can be extended to consider the problem of purification (for related work in two dimensions, see \cite{altafini2}\cite{altafini}). The preliminaries of Lindblad dissipation are outlined in section 2, and the projection of the Lindblad differential equation onto the set of unitary orbits is discussed in section 3. In section 4, we analyze the controllability of this equation for various choices of system parameters $-$ in particular, the case where the anti-symmetric part of the dissipation is non-zero. In section 5, we present a theorem that specifies necessary and sufficient conditions for purifiability.

\section{Preliminaries}

The state of a closed quantum system is described by a norm-one vector in a complex Hilbert space that evolves according to the Schr\"{o}dinger equation:
\[ \frac{d}{dt} |\psi(t) \rangle = -iH|\psi(t)\rangle \]
\noindent In order for the norm to be preserved, the Hamiltonian operator $H$ must be Hermitian.
 An open quantum system, on the other hand, is described by a trace-one, positive-semidefinite operator $\rho$ on the Hilbert space, known as the density operator (or density matrix when working in finite dimensions, as we shall). The interpretation of this matrix is the following: an eigenvalue of $\rho$ is the probability that the system is in the corresponding eigenstate. Since the matrix is trace-one and positive-semidefinite, these eigenvalues are non-negative real numbers that sum to one. A state $|\psi\rangle$ in the closed system becomes a rank-one projection operator $|\psi\rangle\langle \psi|$\footnote{The bra-ket notation prescribes that a vector be written as $|a\rangle$ and its dual as $\langle a|$. Inner products are written $\langle a|b\rangle$ and outer products (or rank-one matrices) $|a\rangle\langle b|$.}. The Schr\"{o}dinger equation, when extended to the density matrix, becomes the von Neumann equation:
\[ \frac{d}{dt}\rho(t)  = [-iH, \rho(t)] \]

Certain relevant quantities are invariant under the von Neumann equation. The density matrix at any time can be written $\rho(t)=U(t)\rho(0)U^{-1}(t)$, where $U(t)$ is unitary. Since matrices at different times are similar, the eigenvalues are constant. The purity of the system, which is defined to be $\sqrt{tr(\rho^2)}$, is also invariant since it is the 2-norm of the vector of eigenvalues. This has implications for quantum control. Since control variables typically appear in the Hamiltonian only, the control dynamics cannot directly alter the probabilities, or purify the state (\emph{i.e.} achieve a purity of one).

However, a system that interacts with the environment will have non-Hamiltonian dynamics. In general, this will be an integro-differential equation, but if one assumes the dynamics depends only the present state and not its history (\emph{i.e.} the Markovian condition) and there is not explicit time dependence, the resulting differential equation is the Lindblad equation\cite{lindblad}\cite{breuer}:
\begin{equation} \frac{d}{dt}|\rho(t) \rangle = [-iH, \rho(t)] + \sum_{j=1}^{M} \left(L_j\rho L_j^\dagger -\frac{1}{2} \{L_j^\dagger L_j,\rho\}\right) \label{lindblad} \end{equation}
\noindent The Lindblad operators $\{L_j\}$ can be taken to be traceless, as adding a multiple of the identity $aI$ to $L_j$ is equivalent to adding an operator $\frac{i}{2}(\bar{a}L_j-aL_j^\dagger)$ to the Hamiltonian. An alternate equation, known as the Lindblad-Kossakowski equation, chooses a basis $\{l_j\}$ of the set of traceless $n$-dimensional matrices that is orthonormal relative to the inner product $(A,B)=tr(A^\dagger B)$:
\begin{equation} \frac{d}{dt}|\rho(t) \rangle = [-iH, \rho(t)] + \sum_{j,k=1}^{n^2-1} a_{jk}\left(l_j\rho l_k^\dagger -\frac{1}{2} \{l_k^\dagger l_j,\rho\}\right) \label{lindkoss}\end{equation}
\noindent where the coefficients $a_{jk}$'s form a positive-semidefinite matrix, known as the Gorini-Kossakowski-Sudarshan matrix.

A Lindblad operator can be thought as a stochastic jump with recoil. Under the influence of one Lindblad operator, a pure state $|\psi\rangle\langle\psi|$ in time $\delta t$ becomes a mixture of two states, $M_1|\psi\rangle\langle\psi|M_1^\dagger+M_2|\psi\rangle\langle\psi|M_2^\dagger$. Here, $M_1 = L \sqrt{\delta t}$ and $M_2=I-\frac{1}{2}L^\dagger L \delta t$. In other words, $|\psi\rangle$ jumps to the state $\frac{1}{\sqrt{\langle\psi|L^\dagger L|\psi\rangle}}L|\psi\rangle$ with probability $\sqrt{\langle\psi|L^\dagger L|\psi\rangle} \delta t$. This is a jump because the $\delta t$ appears in the probability only, meaning the destination state does not approach the original state as $\delta t\rightarrow 0^+$. Conversely, the second state in the mixture is $\frac{1}{\sqrt{1- \langle \psi|L^\dagger L|\psi \rangle\delta t}}(I-\frac{1}{2}L^\dagger L\delta t) |\psi\rangle$, which is infinitesimally close to the original state. In other words, depending on $L$, there may be an infinitesimal recoil needed to compensate for the jump process. When $L^\dagger L$ is a multiple of the identity (for example, when the Lindblad operator is a multiple of a Pauli matrix), the second state reduces to the original state $|\psi\rangle$, so that the jump is recoil-less.

\section{Projection of dynamics in two dimensions}

As mentioned in the introduction, control of open quantum systems typically involves control variables in the Hamiltonian. Hamiltonian operators, however, can only move states \emph{along} unitary orbits, and not \emph{between} orbits. The goal of this paper is to isolate the between-orbit dynamics for a generic two-dimensional system under Lindblad dissipation. Our starting point is the following control system: 

\begin{equation} \frac{d}{dt}\rho = \sum_{j=x,y,z}[-i u_j\sigma_j,\rho] + \frac{1}{2} \sum_{j,k=x,y,z} a_{jk}\left(\sigma_j\rho \sigma_k -\frac{1}{2} \{\sigma_k\sigma_j,\rho\}\right)  \label{bigDE} \end{equation}

\noindent where $\{ \sigma_j:j=x,y,z\}$ are the Pauli matrices. The controls $\{ u_j\}$ are unbounded and may take any value in $\mathbb{R}$. Note that we have chosen our set of control Hamiltonians to span $su(2)$. In other words, we can make any unitary operator up to a non-physical phase difference, and therefore we can move between any two states on a given unitary orbit in arbitrary time. We are neglecting any drift Hamiltonian $H_0=c_0I+\sum_{j=x,y,z} c_j\sigma_j$, since the component along the identity matrix does not contribute to the dynamics, and the components along the Pauli matrices can be treated by re-calibrating the control variables: $u_j\rightarrow u_j-c_j$.

The density operator can be written in terms of the Pauli matrices: $\rho=\frac{1}{2}(I+\sum_{j=x,y,z}n_j\sigma_j)$, where the $n_j$'s are components of the Bloch vector, such that \emph{i.e.} $n_x^2+n_y^2+n_z^2\le 1$. Substituting this expressions into the equation (\ref{bigDE}), we get:
\begin{align*} \frac{1}{2}\sum_j\frac{dn_j}{dt}\sigma_j &=  \sum_{j,k} [-iu_j\sigma_j , \frac{1}{2}n_k\sigma_k] + \frac{1}{4}\sum_{jk} a_{jk}[\sigma_j, \sigma_k] + \frac{1}{4}\sum_{jkl} a_{jk}n_l ( \sigma_j\sigma_l\sigma_k-\frac{1}{2}\{\sigma_k\sigma_j,\sigma_l\} ) \\
	&=  \sum_{j,k} (\frac{-i}{2}u_j n_k+\frac{1}{4}a_{jk}) [\sigma_j,\sigma_k]  +\frac{1}{4}\sum_{jkl} a_{jk}n_l (\sigma_j\sigma_l\sigma_k-\frac{1}{2}\{\sigma_k\sigma_j,\sigma_l\} ) 
\end{align*}
\noindent The Pauli matrices obey the relations 
\[ [\sigma_j,\sigma_k] = 2i\sum_l \epsilon_l \sigma_l \] 
\[ \{\sigma_j,\sigma_k\} = 2 \delta_{jk} I  \] 
\[ \sigma_j\sigma_l\sigma_k -\frac{1}{2}\{\sigma_k\sigma_j,\sigma_l\}= \delta_{kl}\sigma_j+\delta_{jl}\sigma_k - 2\delta_{jk}\sigma_l \]
\noindent Using these relations, the Lindblad-Kossakowski equation above becomes:
\begin{align*} \frac{1}{2}\sum_l\frac{dn_l}{dt}\sigma_l &=  \sum_{j,k,l} \epsilon_{jkl}u_j n_k\sigma_l+\sum_{j,k,l}\frac{1}{2}i a_{jk} \epsilon_{jkl}\sigma_l +  \frac{1}{4}\sum_{jl} (a_{jl}(n_j\sigma_l+n_l\sigma_j)-2a_{jj}n_l\sigma_l) 
\end{align*}
\noindent If we define $b_l=\sum ia_{jk}\epsilon_{jkl}$, and $a^S_{jk}=\frac{a_{jk}+a_{kj}}{2}$, we have
\begin{align*} \sum_l\frac{dn_l}{dt}\sigma_l &=  2\sum_{j,k,l} \epsilon_{jkl}u_j n_k\sigma_l+\sum_{l}b_l \sigma_l +  \sum_{jl} (a^S_{jl}n_j\sigma_l - a^S_{jj}n_l\sigma_l) 
\end{align*}
\noindent In vector notation, we can write:
\begin{equation} \frac{d\vec{n}}{dt} = \vec{b} + \vec{u} \times \vec{n} + (A^S-tr(A^S) I) \vec{n} \label{unproj} \end{equation}
\noindent where $A^S$ is the matrix with elements $a^S_{ij}$.

Now we want to decompose this equation into dynamics along and between unitary orbits.  $\rho$ has eigenvalues $\frac{1\pm r}{2}$, where $ r := |\vec{n}|$ and eigenvectors 
\[ |\psi_{\pm}\rangle := \sqrt{\frac{1+n_z}{2}}  |1\rangle + \frac{n_x+in_y}{\sqrt{2(1+n_z)}} |2\rangle \]
\noindent  Note the spectra correspond one-to-one with the values of $r$, the Bloch radius. It follows that the unitary orbits are concentric spheres, except for the completely mixed state, which corresponds to the point $r=0$. So we can parametrize the orbits by $r$, which lives on the closed interval $[0,1]$, and characterize the motion along orbits with the unit vector $\hat{n}=\vec{n}/r$. We must be careful with respect to the innermost orbit however. $\hat{n}$ is not defined there, which means that the differential equations which we will derive for $r$ and $\hat{n}$ will have solutions that exist for finite times (those solutions correspond to trajectories of $\rho$ that pass through the completely mixed state). 

Since $r^2=\vec{n}\cdot\vec{n}$ , $2r\frac{dr}{dt}=2\vec{n}\cdot\frac{d\vec{n}}{dt}$  and therefore $\frac{dr}{dt}=\hat{n}\cdot\frac{d\vec{n}}{dt}$. So:
\begin{equation*}\frac{dr}{dt} = \hat{n}\cdot \vec{b} + \hat{n}\cdot( \vec{u} \times \vec{n} ) +\hat{n}\cdot  (A^S-tr(A^S) I) \vec{n} \end{equation*}
\noindent The middle term vanishes, the first term is constant in $r$ and the third is linear in $r$. We can write:
\begin{equation}\frac{dr}{dt} = \hat{n}\cdot \vec{b} +  r (\hat{n}\cdot  (A^S \hat{n})-tr(A^S) )  \label{proj} \end{equation}
To find the ODE for $\hat{n}$, we use $\vec{n}=r\hat{n}$, which gives $\frac{d\hat{n}}{dt}=\frac{1}{r}(\frac{d\vec{n}}{dt} - \frac{dr}{dt}\hat{n})$. So we get:
\begin{align} \frac{d\hat{n}}{dt} = 2\vec{u}\times \hat{n} + \frac{1}{r}(\vec{b}-(\vec{b}\cdot\hat{n}) \hat{n}) + (A^S-\hat{n}\cdot ( A^S\hat{n}))\hat{n}\label{hat}\end{align}
Our goal here is to view equation (\ref{proj}) as a control ODE where $\hat{n}$ is the control. This view requires that we have full control over $\hat{n}$, and we claim that we do, in terms specified by the following lemma. 

\begin{lemma}
Let $S$ be the sphere centered at the origin with radius one, let $B$ be the associated closed ball, and let $B^*$ be the closed ball with the origin removed. Let $\hat{n}(t)$ be a piecewise differentiable function from a time interval $[0,T]$ onto $S$ such that the corresponding solution $r(t)$ of equation (\ref{proj}) is contained in the interval $(0,1]$. Then there are piecewise continuous control functions $u_x(t)$, $u_y(t)$ and $u_z(t)$ such that equation (\ref{unproj}) has the piecewise differentiable solution $\vec{n}(t)=r(t)\hat{n}(t)$ on $B^*$.
\end{lemma}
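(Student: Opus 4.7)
The plan is to explicitly construct the controls $\vec{u}(t)$ by inverting the cross-product term in equation (\ref{hat}). The key point is that equation (\ref{unproj}), restricted to $B^*$, is equivalent to the coupled pair of equations (\ref{proj}) and (\ref{hat}) under the change of coordinates $\vec{n} = r\hat{n}$. Since (\ref{proj}) does not involve $\vec{u}$ at all (the term $\hat{n}\cdot(\vec{u}\times\vec{n})$ vanishes identically), once $\hat{n}(t)$ is prescribed, $r(t)$ is determined as a solution of the scalar ODE (\ref{proj}), which is given to lie in $(0,1]$ by hypothesis. So the entire task reduces to choosing $\vec{u}(t)$ so that $\hat{n}(t)$ satisfies (\ref{hat}).

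First I would solve (\ref{hat}) algebraically for $\vec{u}$. Rearranging,
\[
2\vec{u}\times \hat{n} \;=\; \frac{d\hat{n}}{dt} \;-\; \frac{1}{r}\bigl(\vec{b}-(\vec{b}\cdot\hat{n})\hat{n}\bigr) \;-\; \bigl(A^S\hat{n}-(\hat{n}\cdot A^S\hat{n})\hat{n}\bigr) \;=:\; \vec{w}(t).
\]
For this equation to admit a solution $\vec{u}$, the right-hand side $\vec{w}$ must be orthogonal to $\hat{n}$. This compatibility check is the heart of the argument, and I would verify it term by term: $\hat{n}\cdot d\hat{n}/dt=0$ since $|\hat{n}|\equiv 1$; the second term is the projection of $\vec{b}$ onto the tangent plane at $\hat{n}$, hence tangent; and the third term equals $A^S\hat{n}$ minus its component along $\hat{n}$, hence also tangent. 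Thus $\vec{w}\perp\hat{n}$ at every $t$, and a particular solution is
\[
\vec{u}(t) \;=\; \tfrac{1}{2}\,\hat{n}(t)\times \vec{w}(t),
\]
which one checks using $(\hat{n}\times\vec{w})\times\hat{n} = \vec{w} - (\vec{w}\cdot\hat{n})\hat{n} = \vec{w}$.

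Next, I would verify that this $\vec{u}(t)$ is piecewise continuous. The factor $1/r(t)$ appearing in $\vec{w}$ is bounded because $r$ is continuous and, by assumption, stays in $(0,1]$; the vector $d\hat{n}/dt$ is piecewise continuous because $\hat{n}$ is piecewise differentiable; and $\vec{b}$, $A^S$ are constants. Hence the formula above produces a piecewise continuous $u_x, u_y, u_z$ with values in $\mathbb{R}$, as required.

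Finally, I would close the loop by checking that, with this choice of $\vec{u}$, the function $\vec{n}(t) := r(t)\hat{n}(t)$ does satisfy equation (\ref{unproj}) on $B^*$. This follows by running the decomposition $\vec{n}=r\hat{n}$ in reverse: the radial component of $d\vec{n}/dt$ reproduces (\ref{proj}) which holds by construction of $r$, and the tangential component reproduces (\ref{hat}) which holds by our choice of $\vec{u}$. The only real obstacle in the whole argument is the tangency verification of $\vec{w}$; everything else is bookkeeping, and the assumption $r(t)\in(0,1]$ is what keeps the inverse $1/r$ controllable and the Bloch vector away from the singularity at the origin.
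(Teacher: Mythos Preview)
Your proposal is correct and follows essentially the same approach as the paper: rearrange equation (\ref{hat}) to isolate $\vec{u}\times\hat{n}$, then invert the cross product using $\vec{u}=\tfrac{1}{2}\hat{n}\times\vec{w}$, and observe piecewise continuity of the result. Your write-up is in fact more careful than the paper's, since you explicitly verify the tangency condition $\vec{w}\perp\hat{n}$ term by term and close the loop by reassembling (\ref{unproj}) from its radial and tangential parts.
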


\begin{proof} First re-write equation (\ref{hat}):
\begin{align*} \vec{u}\times \hat{n} = \frac{1}{2} \left( \frac{d\hat{n}}{dt} - \frac{1}{r}(\vec{b}-(\vec{b}\cdot\hat{n}) \hat{n}) - (A^S-\hat{n}\cdot ( A^S\hat{n}))\hat{n}\right)\end{align*}
\noindent Any equation of the form $\vec{x}\times\vec{a} = \vec{b}$, where $\vec{a}\cdot\vec{b}=0$, has solution $\vec{x}=\vec{a}\times\vec{b}$. It follows that we can choose the controls to be:
\begin{align*}
\vec{u} (t) &=   \hat{n}\times \frac{1}{2} \left( \frac{d\hat{n}}{dt} - \frac{1}{r}(\vec{b}-(\vec{b}\cdot\hat{n}) \hat{n}) - (A^S-\hat{n}\cdot ( A^S\hat{n}))\hat{n}\right) \\
&= \frac{1}{2} \left( \hat{n}(t)\times \dot{\hat{n}} - \frac{1}{r(t)}\hat{n}(t)\times\vec{b} -\hat{n}(t)\times (A^S \hat{n}(t))\right)
\end{align*}
Since $\hat{n}(t)$, $\dot{\hat{n}}(t)$ and $r(t)$ are piecewise continuous, so is $\vec{u}(t)$. \end{proof}

Note that the prescription for $\vec{u}(t)$ is unbounded as $r\rightarrow 0$ because of the middle term. This is because the system cannot approach the completely mixed state from any direction: when $\vec{n}=\vec{0}$, $\frac{d\vec{n}}{dt}$ is fixed to be $\vec{b}$ regardless of the controls $\vec{u}(t)$.  

We finish this section by writing down an alternate version of (\ref{proj}) in terms of the eigenvalues of $A^S$, which allows us to specify a given system in terms of six real parameters. Let $a_1\ge a_2 \ge a_3$ be the eigenvalues of $A^S$. Let $\{b_j : j=1,2,3\}$ and $\{n_j : j=1,2,3\}$ be the components of $\vec{b}$ and $\vec{n}$ relative to the intrinsic axes of $A^S$ (whereas the subscripts $x$, $y$ and $z$ denote the components relative to the eigenvectors of the Pauli matrices). This gives:
\begin{equation}\frac{dr}{dt} = \sum^3_{j=1}b_jn_j -  r\sum^3_{j=1}a_j (1-n_j^2)   \label{paraproj} \end{equation}

\noindent The six parameters obey the following inequality, which arises from the positive semi-definiteness of $A$: 
\begin{equation}a_1b_1^2+ a_2b_2^2+ a_3b_3^2 \le 4a_1a_2a_3 \label{bineq}
\end{equation}
\noindent The positive semi-definiteness of $A$ also ensures the positive semi-definiteness of $A^S$, so we also have $a_1,a_2.a_3 \ge 0$.

%
%

\section{Controllability analysis}

For a fixed $r$, the right-hand side of equation \eqref{paraproj} can be seen as a map from $S^2$, the set of available controls, to the set of possible values of $\dot{r}$. Since this is a smooth map from a compact set to $\mathbb{R}$, the image should be a closed finite interval. To analyze the controllability of (\ref{paraproj}), we define functions $f_M(r)$ and $f_m(r)$ to be the right and left endpoints, respectively, of this interval. That is, $f_M(r)$ is the maximum possible rate at which $r$ can increase, and $f_(m)$ the minimum, for a given value of $r$. It is clear that (\ref{paraproj}) is controllable on a closed subinterval of $(0,1)$ if $f_M>0$ and $f_m<0$ everywhere on the subinterval. To steer between two points $r_i$ and $r_f$, we choose our controls so that $\dot{r}(t)=f_M(r(t))$ if $r_i<r_f$, or $\dot{r}(t)=f_m(r(t))$ if $r_i>r_f$. 

Some properties of $f_M$ and $f_m$ can be gleaned from inspection of the differential equation, which we collect into a proposition:
\begin{proposition} If $f_M(r):=\sup\{\dot{r}(r)\}$ and $f_m(r):=\inf\{\dot{r}(r)\}$,
\begin{enumerate}
\item $f_M$ and $f_m$ are non-increasing.
\item $\lim_{r\rightarrow 0+} f_M(r) = |\vec{b}|$ and $\lim_{r\rightarrow 0+} f_m(r) = -|\vec{b}|$.
\item $f_M(1)\le0$.
\item $f_m(r)\le 0$ for all $r$ and system parameters. $f_m(r)=0$ for $r>0$ only for the trivial where $a_1=0$ (which requires that all $a_j$'s and $b_j$'s are zero.

\item If $\vec{b}$ has non-zero magnitude, $f_M(r)$ has an isolated intercept $r_T\in (0,1]$.
\end{enumerate}
\end{proposition}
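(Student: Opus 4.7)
My plan is to treat the five claims in order. Throughout I use that, for each fixed $\hat n\in S^2$, the right-hand side of (\ref{paraproj}) is an affine function $g_{\hat n}(r):=\vec b\cdot\hat n-r\sum_j a_j(1-n_j^2)$ of $r$ with slope $-\sum_j a_j(1-n_j^2)\le 0$; since $(r,\hat n)\mapsto g_{\hat n}(r)$ is continuous on $[0,1]\times S^2$ with $S^2$ compact, $f_M$ and $f_m$ are continuous functions of $r$. Claim (1) is then immediate, since a pointwise sup (respectively inf) of non-increasing functions is non-increasing. For claim (2), $g_{\hat n}(0)=\vec b\cdot\hat n$ has sup and inf $\pm|\vec b|$ on $S^2$, and continuity at $r=0$ yields the stated limits.

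For claim (3) I would appeal to the positivity-preservation of the Lindblad semigroup: the closed Bloch ball $\{|\vec n|\le 1\}$ is forward-invariant under (\ref{bigDE}) for every choice of controls, so $\dot r\le 0$ at $r=1$ for every $\hat n$. Equivalently, $tr(\rho^2)=(1+r^2)/2$ attains its maximum on the set of density matrices at pure states, so its time derivative cannot be positive there. For claim (4), choosing $\hat n=-\vec b/|\vec b|$ (or any unit vector when $\vec b=0$) gives $\dot r\le -|\vec b|\le 0$, proving $f_m(r)\le 0$. If $f_m(r)=0$ for some $r>0$, this bound forces $\vec b=0$; then $f_m(r)=-r\sup_{\hat n}\sum_j a_j(1-n_j^2)=0$ requires $\sum_j a_j(1-n_j^2)\equiv 0$ on $S^2$, and testing at coordinate axes yields each $a_j=0$.

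Claim (5) reduces to establishing strict monotonicity of $f_M$, for which I would use the full positive semi-definiteness of the GKS matrix $A$: in the $A^S$-eigenbasis, the $2\times 2$ principal minors give the additional constraints $b_j^2\le 4a_{j'}a_{j''}$ for $\{j,j',j''\}=\{1,2,3\}$. Hence $\vec b\neq 0$ forces at least two eigenvalues of $A^S$ to be strictly positive, and by the ordering $a_1\ge a_2\ge a_3$ this means $a_1,a_2>0$. Consequently $\sum_j a_j(1-n_j^2)=0$ has no solution on $S^2$ (it would require $n_1^2=n_2^2=1$), every $g_{\hat n}$ has strictly negative slope, and $f_M$ is therefore strictly decreasing. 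Combined with $f_M(0^+)=|\vec b|>0$ from (2) and $f_M(1)\le 0$ from (3), the intermediate value theorem produces a unique --- hence isolated --- zero $r_T\in(0,1]$. The main obstacle is claim (3): a direct algebraic verification of $\vec b\cdot\hat n\le\sum_j a_j(1-n_j^2)$ on $S^2$ purely from the positivity conditions looks tedious, so I prefer to invoke the invariance of the Bloch ball as a standard consequence of complete positivity of the Lindblad semigroup.
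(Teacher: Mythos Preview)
Your argument is correct and follows essentially the same route as the paper's proof: affine-in-$r$ structure with non-positive slope for (1), the constant term $\vec b\cdot\hat n$ for (2), forward-invariance of the Bloch ball for (3), and the implication $\vec b\neq 0\Rightarrow a_1,a_2>0$ to get strict monotonicity for (5). Your treatment is in fact more careful than the paper's in two places: you make explicit the continuity of $f_M,f_m$ via compactness of $S^2$, and you justify $a_1,a_2>0$ through the $2\times 2$ principal-minor inequalities $b_j^2\le 4a_{j'}a_{j''}$, whereas the paper only records the determinant constraint (\ref{bineq}) and asserts the conclusion without proof. One small point worth stating explicitly in (5): from ``every $g_{\hat n}$ has strictly negative slope'' to ``$f_M$ strictly decreasing'' you need a uniform bound on the slopes, which you have since $\sum_j a_j(1-n_j^2)\ge a_2+a_3>0$ on the compact $S^2$; otherwise a supremum of strictly decreasing functions can in principle be merely non-increasing.
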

\begin{proof}
\begin{enumerate}
\item If a control vector $\hat{n}^*$ achieves the maximum $\dot{r}$ at $r=r^*$, then choosing that control for all $r<r^*$ can only achieve a larger or equal $\dot{r}$, since the coefficient of $r$ in the differential equation, $\sum^3_{j=1}a_j (1-n_j^2) $, must be non-negative. Similarly, if a control $\vec{n}^*$ achieves the minimum at $r=r^*$, then choosing that control for all $r>r^*$ can only achieve a smaller or equal $\dot{r}$. Furthermore, if $a_1$ and $a_2$ are positive, the coefficient of $r$ cannot be made zero, so in this case, we can strengthen ``non-increasing" to ``decreasing".
\item As $r\rightarrow 0+$, the linear term in (\ref{paraproj}) can be neglected, and we must extremize $\vec{b}\cdot\vec{n}$. The range of this is clearly $[-|\vec{b}|,|\vec{b}|]$
\item Since $r$ cannot exceed one, $\dot{r}|_{r=1}\le 0$.

\item Non-positivity follows from 1) and 2). If $a_1>0$, $\dot{r}$ can be always made negative by choosing $\vec{n}=\left<0,0,1\right>$. 
\item Non-zero $\vec{b}$ implies that at $a_1$ and $a_2$ are positive, which means that $f_M$ is strictly decreasing on $(0,1)$. This, together with 2) and 3) imply the existence of $r_T$.  
\end{enumerate}
\end{proof} 

\begin{corollary}
If $\vec{b}$ is nonzero, there is an interval $(0,r_T)$, which we call a \emph{trap}, inside of which the system is controllable. Outside of the trap, on $[r_T,1]$, the system is one-way controllable; that is, $r_i$ can be steered to $r_f$ in finite time if and only if $r_f\le r_i$. 
\end{corollary}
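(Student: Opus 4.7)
The plan is to read off the signs of $f_M$ and $f_m$ on the two subintervals from the preceding proposition, use those signs to design steering strategies based on the extremal ODEs $\dot r = f_M(r)$ and $\dot r = f_m(r)$, bound the transit times so that each maneuver is completed in finite time, and then lift each strategy back to Hamiltonian controls via the lemma of Section 3.

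First I would determine the signs of $f_M$ and $f_m$. Since $\vec b\ne 0$, item 5 of the proposition forces $a_1, a_2 > 0$, so $f_M$ is strictly decreasing on $(0,1)$ with $f_M(0^+)=|\vec b|>0$ and unique zero $r_T$; hence $f_M > 0$ strictly on $(0, r_T)$ and $f_M(r) < 0$ for $r > r_T$ (with $f_M(r_T) = 0$ and $f_M(1)\le 0$ from item 3). Item 4 together with $\vec b\ne 0$ excludes the trivial case, so $f_m(r) < 0$ throughout $(0,1]$.

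Next I would establish the controllability claims. For any $r_i, r_f\in (0,r_T)$, I would choose a pointwise extremizer of the right-hand side of (\ref{paraproj}) at each $r$ in $[r_i, r_f]$ or $[r_f, r_i]$, giving $\dot r = f_M(r)$ in the outward case and $\dot r = f_m(r)$ in the inward case. Both $f_M$ and $f_m$ are continuous (as extrema over the compact sphere of a smooth function) and strictly signed on the compact subinterval in question, so they are bounded away from zero there; the transit time $\int |dr/f_{M/m}(r)|$ is therefore finite. On $[r_T, 1]$, the bound $f_M(r)\le 0$ means $\dot r \le 0$ for \emph{every} admissible control, so $r$ is non-increasing along any trajectory and $r_f > r_i$ is unreachable. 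Conversely, for $r_f \le r_i$ with $r_i\in [r_T,1]$, the same integral estimate using $f_m < 0$ on $[r_f, r_i]$ produces a finite-time descent.

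Finally I would lift the resulting $\hat n(t)$ to Hamiltonian controls $u_x, u_y, u_z$ in (\ref{unproj}) by invoking the Section 3 lemma, which guarantees piecewise-continuous $\vec u(t)$ whenever $\hat n(t)$ is piecewise differentiable and the corresponding $r(t)$ remains in $(0,1]$. The main obstacle is ensuring that a piecewise-smooth extremizing selection $r\mapsto \hat n^*(r)$ exists, so that $\hat n(t) := \hat n^*(r(t))$ is piecewise differentiable and admissible in the lemma. Since the objective in (\ref{paraproj}) is a smooth function on the compact manifold $S^2$ depending smoothly on the scalar parameter $r$, its arg-extrema vary piecewise smoothly off a discrete set of coalescence values of $r$, which yields the required selection and completes the proof.
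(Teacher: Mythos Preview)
Your argument is correct and follows the same overall logic as the paper---read off the signs of $f_M$ and $f_m$ from the proposition, then design steering strategies accordingly---but you take a more elaborate route at the steering step. You track the instantaneous extremum, setting $\dot r = f_M(r(t))$ or $\dot r = f_m(r(t))$, which forces you to confront the selection problem for $\hat n^*(r)$ and then argue that the arg-extrema vary piecewise smoothly in $r$. The paper sidesteps this entirely by using a \emph{constant} control: to increase $r$ in the trap it fixes the single $\hat n^*$ that realizes $f_M(r_f)$ at the target, and to decrease $r$ it fixes the $\hat n^*$ realizing $f_m$ at the target (or at $r_i$ outside the trap). Because the right-hand side of \eqref{paraproj} is affine in $r$ with non-positive slope for any fixed $\hat n$, that constant choice already yields $\dot r$ bounded away from zero on the entire transit interval, giving finite time immediately; and a constant $\hat n(t)$ is trivially admissible in the Section~3 lemma. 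Your approach has the virtue of being time-optimal, but the paper's is shorter and avoids the parametric-Morse-theory handwave in your last paragraph, which is the one place your proof is not fully rigorous as written.
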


\begin{proof}The statements in the proposition imply that $f_m(r)<0<f_M(r)$ on $(0,r_T)$, in which case we can steer $r_i$ to $r_f\ge r_i$ by choosing the control that satisfies $\dot{r}=f_M(r_f)$ provided $r_f<r_T$. Conversely, to steer $r_i$ to $r_f \le r_i$, we can choose the control that satisfies $\dot{r}=f_m(r_f)$. On the interval $[r_T,1]$, $f_M(r)\le 0$, so $r_i$ cannot be steered to $r_f >r_i$, but can be steered to $r_f<r_i$ by choosing the control that satisfies $\dot{r} = f_m(r_i)$, which must be negative. \end{proof}

In the case that $|\vec{b}|=0$, there is no trap: $\dot{r}\le 0$ for all $r$, and in fact we can say that 
\begin{equation} -r(a_1+a_2) \le \dot{r} \le -r(a_2+a_3) \end{equation}
where we can achieve the upper and lower bounds by choosing $\vec{n}$ to be $\left<\pm 1,0,0\right>$ and $\left<0,0,\pm 1\right>$, respectively. In the case that $a_2=a_3=0$, the decay of $r$ may be halted, but otherwise $r$ will decay exponentially to zero at a rate above or equal to $a_2+a_3$. It is evident, then, that the presence of an asymmetric part in the dissipative term (represented by $\vec{b}$) significantly enhances the possibility of control. 

In order to calculate $f_M$ and $f_m$ for given $r$, we can use the method of Lagrange multipliers. In some cases, we can solve the resulting equations analytically, but in general one must find the roots of a sixth-order polynomial, so we must resort to numerics. Before considering the general case, we will look at a particular case that can be treated analytically. We consider the possibility that a two-level system can undergo one of two processes represented by the raising and lowering operators $\sigma_+$ and $\sigma_-$ at rates $\alpha_+$ and $\alpha_-$, respectively. If one constructs the Lindblad equation using this scenario, and expresses it in the basis of the Pauli matrices, one finds that $a_1=a_2=\frac{|\alpha_+-\alpha_-|}{2}$, $a_3=0$, $b_1=b_2=0$ and $b_3=\alpha_+-\alpha_-$. The fact that $\vec{b}$ has only one non-zero component simplifies the equations so that we can treat the system analytically. 

If we apply the method of Lagrange multipliers to the right-hand side of (\ref{paraproj}) and set $b_1=b_2=0$ and $a_1=a_2$, we get:
\begin{align*}
2ra_1n_1 &= 2 \lambda n_1 \nonumber \\ 2ra_1n_2 &= 2 \lambda n_2 \nonumber \\
b_3 &= 2 \lambda n_3 \nonumber \\
n_1^2+n_2^2+n_3^2&=1 \end{align*}
where $\lambda$ is the Lagrange multiplier. This has the following solutions:
\begin{align}
&\hat{n}= \left<0,0,\pm 1\right>  \label{sol1} \\
&\hat{n}=\left<n_1, n_2 , \frac{b_3}{2a_1r}\right>    \label{sol2} \end{align}
where $n_1$ and $n_2$ in \eqref{sol2} can be any pair that satsifies the normalization condition. Solutions \eqref{sol2} do not exist for all $r$, since the magnitude of $n_3$ must not exceed one. They exist only on $[\frac{|b_3|}{2a_1},1]$. To determine which solutions correspond to $f_M$ and $f_m$, we substitute back into \eqref{paraproj}. Solutions \eqref{sol1} give
\begin{equation} \dot{r}= \pm |b_3| - 2a_1r \label{s1} \end{equation}
and solutions \eqref{sol2} give
\begin{equation} \dot{r} = \frac{|b_3|^2}{4a_1r} - ra_1 \label{s2} \end{equation}
We can easily conclude that $f_m(r)= - |b_3| - 2a_1r$. Furthermore, the right-hand side of \eqref{s2} is greater than or equal to those of \eqref{s1}, but since it has a limited interval of definition, we have:

\begin{equation}
f_M(r)=\left\{ \begin{array}{ll}
|b_3| - 2a_1r, &            r\in(0,\frac{|b_3|}{2a_1})                       \\
\frac{|b_3|^2}{4a_1r} - ra_1,  &  r\in (\frac{|b_3|}{2a_1},1) 
\end{array} \right.
\end{equation}

It happens that $r_T$ in this case coincides with the point at which $f_M$ switches between \eqref{sol1} and \eqref{sol2}, \emph{i.e.} $r_T=\frac{|b_3|}{2a_1}$. This is not a general phenomenon, however: if $a_3>0$, the switching point and the trap radius would not coincide. Fig. \ref{analyt} depicts these solutions for $a_1=a_2=10$ and $b_3=12$. 

   \begin{figure}[thpb]
      \centering
      \includegraphics[scale=0.7]{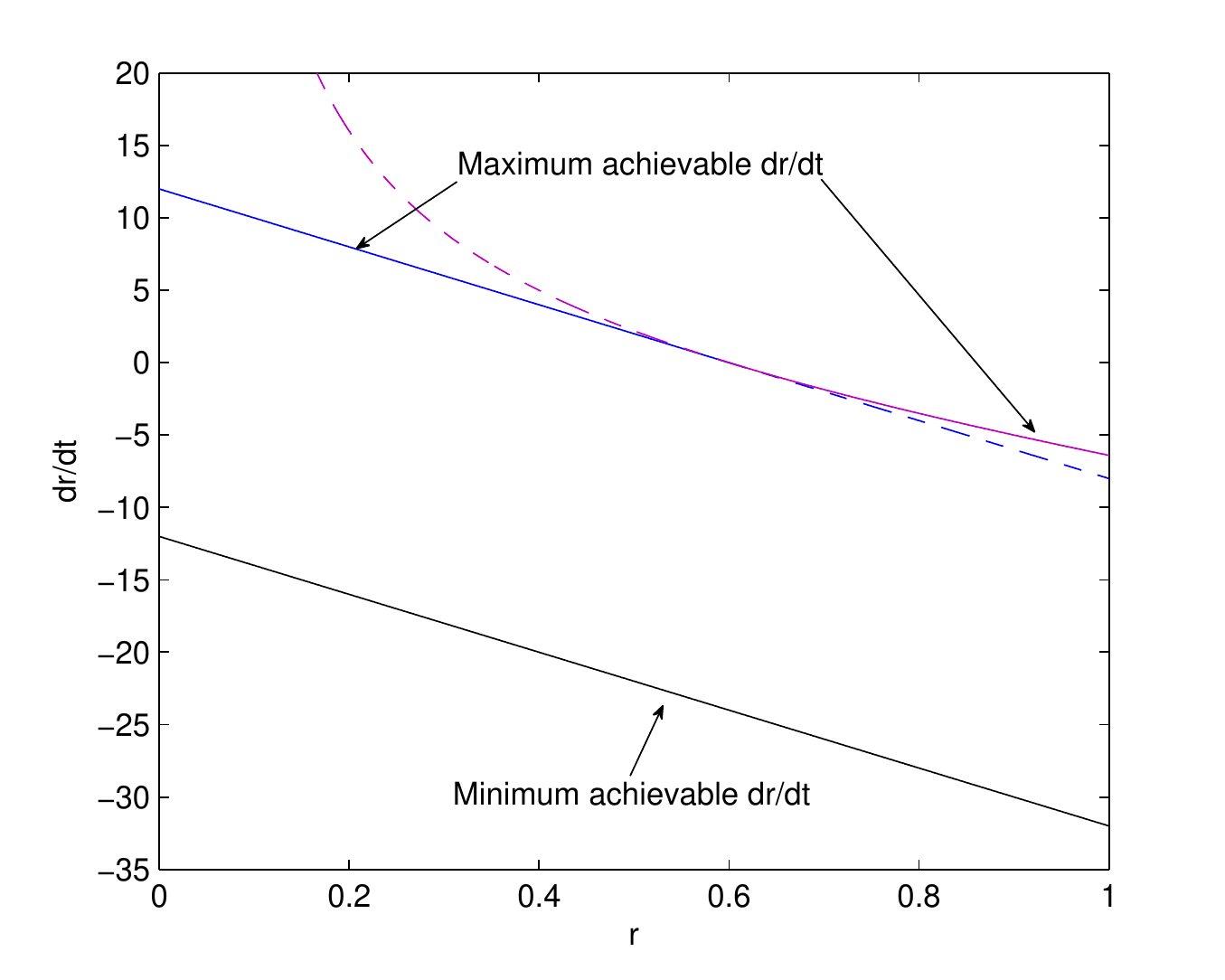}
      \caption{Maximum and minimum achievable $dr/dt$ vs. $r$ for a case that can be solved analytically. System parameters: $a_1=a_2=10$, $a_3=0$, $b_1=b_2=0$, $b_3=12$. Solid lines represent $f_M$ and $f_m$. Blue and purple indicate solutions \eqref{sol1} and \eqref{sol2}, respectively. Dotted lines indicate where these solutions do not coincide with $f_M$.}
      \label{analyt}
   \end{figure}

   \begin{figure}[thpb]
      \centering
      \includegraphics[scale=0.7]{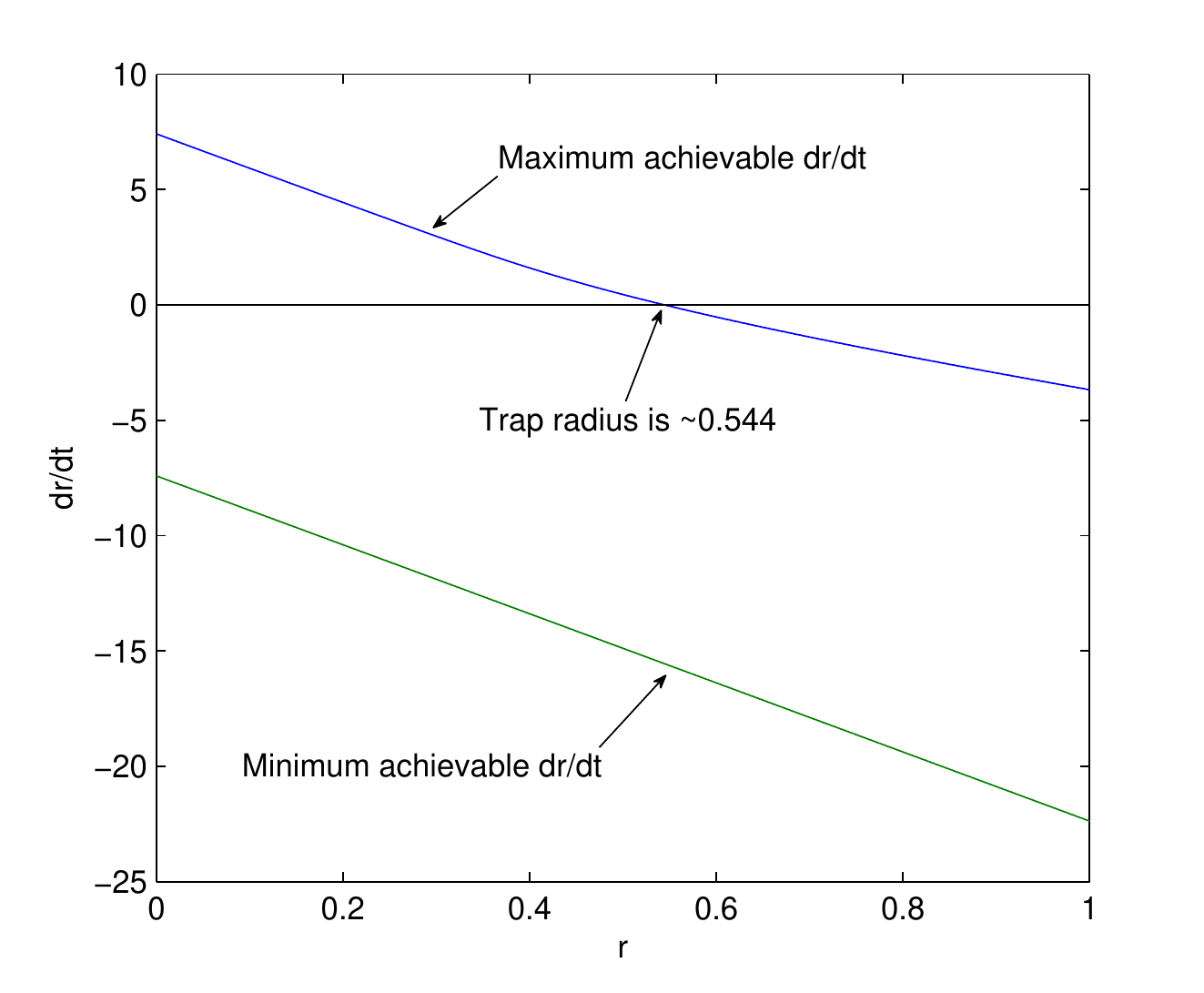}
      \caption{Maximum and minimum achievable $dr/dt$ vs. $r$ for a case that must be solved numerically. The trap radius is where the maximum achievable $dr/dt$ passes from positive to negative. System parameters: $a_1=10$, $a_2=5$, $a_3=0.3$, $b_1=0.15\sqrt{0.6}$, $b_2=0.9$, $b_3=3\sqrt{6}$}
      \label{numer}
   \end{figure}

More generally, one can perform this analytical treatment in the following cases: (1) if $\vec{b}$ has one non-zero component, (2) if $\vec{b}$ has two non-zero components, and the corresponding $a_j$'s are equal, and (3) $\vec{b}$ has three non-zero components, and $a_1=a_2=a_3$. If the system does not fall into any of those three categories, Lagrange multipliers lead to either a fourth-degree polynomial in $\lambda$ (technically solvable, but inordinately messy) or a sixth-degree polynomial (generally not solvable). The fourth-degree polynomial arises in the cases (1) $\vec{b}$ has two non-zero components but corresponding $a_j$'s are not equal and (2) $\vec{b}$ has three non-zero components and $a_1=a_2>a_3$ or $a_1>a_2=a_3$. The sixth-degree polynomial arises if $\vec{b}$ has three non-zero components and $a_1>a_2>a_3$. 

In those cases, we can find the real roots of the polynomial numerically. Then we can compute the corresponding values of $\dot{r}$, choose the maximum and minimum values, and assign the values to $f_M$ and $f_m$. In fig. \ref{numer}, $f_M$ and $f_m$ are shown for a particular system that required solving a sixth-order polynomial. We have computed the curves for 10,000 points apiece. $r_T$ can be found by numerically interpolating $f_M$. For the case depicted in fig. \ref{numer}, $r_T$ was computed to be $0.544387876644064$ (to machine precision).

\section{Purifiable systems}

An important goal in quantum control is purification: the process of steering a mixed state to a pure state, which can be  characterized by a purity $\sqrt{tr(\rho^2)}=1$. Alternatively, a system is pure if the leading eigenvalue is one, with remaining eigenvalues being zero. In terms of the above analysis, we say a system is \emph{purifiable} if and only if the trap radius $r_T=1$. In other words, the function $f_M(r)$ has an isolated intercept at $r=1$. This section is devoted to proving a theorem that characterizes the possible purifiable systems. First, we will use the following lemma:

 \begin{lemma}
$\dot{r}=0$ at $r=1$ if and only the system is in a state that is an eigenvector of all contributing Lindblad operators.
\end{lemma}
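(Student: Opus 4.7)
The plan is to convert the claim about $\dot r$ at $r=1$ into an equivalent claim about $\dot P$, where $P=\mathrm{tr}(\rho^2)$ is the purity. In two dimensions the eigenvalues of $\rho$ are $(1\pm r)/2$, so $P=(1+r^2)/2$ and hence $\dot P = r\,\dot r$. At $r=1$ this gives the clean identity $\dot r|_{r=1}=\dot P|_{r=1}$, which is the quantity I actually want to manipulate because the Lindblad equation interacts naturally with $\mathrm{tr}(\rho^2)$.

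First I would recast the Lindblad-Kossakowski form \eqref{lindkoss} in its diagonalized Lindblad form \eqref{lindblad}: diagonalize the Gorini-Kossakowski-Sudarshan matrix and absorb its eigenvalues into the operators $L_j$, keeping only those $L_j$ with strictly positive weight — these are the ``contributing'' operators in the statement. This step is just a change of basis; it does not affect $\dot\rho$ but clarifies what the lemma asserts. Next I would compute $\dot P$ at a pure state $\rho=|\psi\rangle\langle\psi|$ directly from \eqref{lindblad}. The unitary term $[-iH,\rho]$ contributes nothing since $\mathrm{tr}(\rho[H,\rho])=0$, leaving
\begin{align*}
\dot P\bigr|_{\rho=|\psi\rangle\langle\psi|}
&=2\sum_j \mathrm{tr}\!\left(\rho L_j\rho L_j^{\dagger}-\tfrac12\rho\{L_j^{\dagger}L_j,\rho\}\right)\\
&=2\sum_j\Bigl(|\langle\psi|L_j|\psi\rangle|^2-\langle\psi|L_j^{\dagger}L_j|\psi\rangle\Bigr),
\end{align*}
using $\rho^2=\rho$ to cancel one of the anticommutator terms against $\mathrm{tr}(\rho L_j^{\dagger}L_j\rho)$.

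The key inequality is then Cauchy-Schwarz on the inner product $\langle a|b\rangle$ with $|a\rangle=|\psi\rangle$ and $|b\rangle=L_j|\psi\rangle$: we have $|\langle\psi|L_j|\psi\rangle|^2\le\langle\psi|L_j^{\dagger}L_j|\psi\rangle$, with equality if and only if $L_j|\psi\rangle$ is a scalar multiple of $|\psi\rangle$, i.e.\ $|\psi\rangle$ is an eigenvector of $L_j$. Therefore each summand above is non-positive, so $\dot P|_{r=1}\le 0$, and it vanishes iff every contributing $L_j$ has $|\psi\rangle$ as an eigenvector. Combined with $\dot r|_{r=1}=\dot P|_{r=1}$, this is exactly the desired equivalence.

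The main obstacle is more bookkeeping than mathematical: being careful about what ``contributing'' means (the lemma would be false if one included $L_j$'s with zero weight in the diagonalized form, since arbitrary such operators need not have $|\psi\rangle$ as an eigenvector yet contribute nothing to the dynamics) and verifying that the equivalence between the Kossakowski form used throughout the paper and the operator-sum form \eqref{lindblad} preserves the condition in the statement. Once that is settled, the analytic content is just Cauchy-Schwarz applied term-by-term.
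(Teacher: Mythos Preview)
Your argument is correct. The paper's proof reaches the same conclusion by a closely related but slightly different route: instead of passing through the purity $P=\mathrm{tr}(\rho^2)$, it differentiates $r=\langle\psi_+|\rho|\psi_+\rangle-\langle\psi_-|\rho|\psi_-\rangle$ directly, obtains $\dot r=\langle\psi_+|\dot\rho|\psi_+\rangle-\langle\psi_-|\dot\rho|\psi_-\rangle$, and then at $r=1$ inserts the resolution of the identity $I=|\psi_+\rangle\langle\psi_+|+|\psi_-\rangle\langle\psi_-|$ into $\langle\psi_+|L_j^\dagger L_j|\psi_+\rangle$ to arrive at the explicit formula $\dot r=-2\sum_j|\langle\psi_-|L_j|\psi_+\rangle|^2$. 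Your Cauchy--Schwarz step and the paper's completeness step are two faces of the same coin: in two dimensions, the Cauchy--Schwarz defect $\langle\psi|L_j^\dagger L_j|\psi\rangle-|\langle\psi|L_j|\psi\rangle|^2$ is exactly $|\langle\psi_-|L_j|\psi_+\rangle|^2$. The paper's version makes the eigenvector criterion visible as the vanishing of an explicit off-diagonal matrix element; your version has the advantage that it never names $|\psi_-\rangle$ and so generalizes verbatim to higher dimensions, which is relevant given the paper's stated interest in extending the formalism beyond $n=2$.
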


\begin{proof}
Because the Bloch radius can be written $r=\lambda_+-\lambda_-$, where $\lambda_+ \ge \lambda_2$ are the eigenvalues of $\rho$, we can write $r= \langle\psi_+|\rho|\psi_+\rangle - \langle\psi_-|\rho|\psi_-\rangle$. Differentiating this, we get an alternative expression for $\dot{r}$:
\begin{align*}
\frac{dr}{dt} &=  \langle\dot{\psi_+}|\rho|\psi_+\rangle - \langle\dot{\psi_-}|\rho|\psi_-\rangle
			+ \langle\psi_+|\dot{\rho}|\psi_+\rangle - \langle\psi_-|\dot{\rho}|\psi_-\rangle 
			+ \langle\psi_+|\rho|\dot{\psi_+}\rangle - \langle\psi_-|\rho|\dot{\psi_-}\rangle       \\
 &=  \lambda_+ ( \langle\dot{\psi_+}|\psi_+\rangle + \langle\psi_+|\dot{\psi_+}\rangle) 
 - \lambda_-  (\langle\dot{\psi_-}|\psi_-\rangle  + \langle\psi_-|\dot{\psi_-}\rangle )
			+ \langle\psi_+|\dot{\rho}|\psi_+\rangle - \langle\psi_-|\dot{\rho}|\psi_-\rangle \\
&= \langle\psi_+|\dot{\rho}|\psi_+\rangle - \langle\psi_-|\dot{\rho}|\psi_-\rangle 
\end{align*}
\noindent where in the last step, the normalization of the vectors makes the quantities in parentheses vanish. 
Now, if the dissipation is characterized by a collection of Lindblad operators $\{ L_j\}$'s, which are not necessarily orthogonal we can use \eqref{lindblad} to specify $\dot{\rho}$:
\begin{align*}
\frac{dr}{dt} &= \langle\psi_+|[-iH,\rho]|\psi_+\rangle - \langle\psi_-|[-iH,\rho]|\psi_-\rangle \\
&+ \sum_j  \left( \langle\psi_+|L_j\rho L_j^\dagger|\psi_+\rangle -\frac{1}{2}\langle\psi_+|L_j^\dagger L_j\rho |\psi_+\rangle-\frac{1}{2}\langle\psi_+|\rho  L_j^\dagger L_j|\psi_+\rangle \right.\\
& \left. - \langle\psi_-|L_j\rho L_j^\dagger|\psi_-\rangle + \frac{1}{2}\langle\psi_-|L_j^\dagger L_j\rho |\psi_-\rangle+\frac{1}{2}\langle\psi_-|\rho  L_j^\dagger L_j|\psi_-\rangle  \right) \end{align*}

\noindent The Hamiltonian terms vanish since they are diagonal elements of a skew-symmetric matrix. We are interested in $\dot{r}$ when $r=1$, so insert $\rho=|\psi_+\rangle\langle\psi_+|$. We get:
\begin{align*}
\frac{dr}{dt} = \sum_j \left( \langle\psi_+|L_j|\psi_+\rangle\langle\psi_+| L_j^\dagger|\psi_+\rangle-\langle\psi_+| L_j^\dagger L_j|\psi_+\rangle  - \langle\psi_-|L_j|\psi_+\rangle\langle\psi_+| L_j^\dagger|\psi_-\rangle \right)
\end{align*} 
\noindent If we insert the identity operator between $L^\dagger_j$ and $L_j$ in the middle term, we get the expression:
\begin{align*}
\frac{dr}{dt} = -2\sum_j |\langle \psi_-| L_j | \psi_+\rangle|^2
\end{align*}
For $\dot{r}$ to vanish, we need $|\langle \psi_-| L_j | \psi_+\rangle|^2$ to vanish for each $L_j$. This is only possible however if $|\psi_+\rangle$ is an eigenvector of each $L_j$, since otherwise $L_j|\psi_+\rangle$ would have some component in the $|\psi_-\rangle$ direction. This proves the lemma.
\end{proof}

This leads to the following theorem:

\begin{theorem}
A two-level Lindblad system is purifiable if and only if one of the following characterizations hold:
\begin{itemize}
\item There is one Lindblad operator, and it is singular.
\item There is one Linblad operator and it is non-singular with non-orthogonal eigenvectors.
\item There is no more than one singular Lindblad operator and any number of non-singular operators. All share a common eigenvector.  
\item There are any number of non-singular Lindblad operators that share a common eigenvector.
\end{itemize}
\end{theorem}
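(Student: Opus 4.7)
The plan is to combine the preceding lemma with a direct computation in a preferred basis. Purifiability is the condition $r_T = 1$, which forces $f_M(1) = 0$ to be attained by some control direction. The concluding formula in the proof of the previous lemma shows that at any pure state $|\psi_+\rangle\langle\psi_+|$,
\[
  \dot r = -2\sum_j |\langle\psi_-|L_j|\psi_+\rangle|^2 \le 0,
\]
with equality iff $|\psi_+\rangle$ is a common eigenvector of every $L_j$. So purifiability \emph{forces} the existence of a common eigenvector $|\psi_+\rangle$ shared by all the $L_j$'s.

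I would then fix a basis with $|\psi_+\rangle = |1\rangle$. Tracelessness plus the eigenvector condition forces each $L_j$ into the upper-triangular form
\[
  L_j = \lambda_j|1\rangle\langle 1| + \mu_j|1\rangle\langle 2| - \lambda_j|2\rangle\langle 2|.
\]
In this representation $L_j$ is singular iff $\lambda_j = 0$, so every such singular operator is a scalar multiple of the nilpotent $|1\rangle\langle 2|$; this collapses any collection of ``multiple singular operators sharing $|1\rangle$'' down to ``at most one genuine singular operator.'' For a non-singular $L_j$, the second eigenvector is proportional to $\mu_j|1\rangle - 2\lambda_j|2\rangle$, which is orthogonal to $|1\rangle$ iff $\mu_j = 0$.

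The technical heart of the argument is to compute $\dot r$ when the Bloch vector is aligned with the pure-state direction, i.e., for $\rho = \mathrm{diag}\left(\tfrac{1+r}{2}, \tfrac{1-r}{2}\right)$. Using the triangular form above in $\dot\rho = \sum_j L_j\rho L_j^\dagger - \tfrac{1}{2}\{L_j^\dagger L_j, \rho\}$, a short calculation shows that all the $|\lambda_j|^2$ contributions to the diagonal entries of $\dot\rho$ cancel between the two pieces, leaving
\[
  \dot\rho_{11} = \tfrac{1-r}{2}\sum_j |\mu_j|^2, \qquad \dot\rho_{22} = -\tfrac{1-r}{2}\sum_j |\mu_j|^2,
\]
and hence $\dot r = \dot\rho_{11} - \dot\rho_{22} = (1-r)\sum_j |\mu_j|^2$.

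The if-and-only-if now reduces to ``some $\mu_j \ne 0$.'' If so, the control $\hat n = \hat z$ produces $\dot r > 0$ on $(0,1)$, so $f_M > 0$ there, giving $r_T = 1$. If every $\mu_j = 0$, then each $L_j = \lambda_j \sigma_z$, the dissipation is pure dephasing with $\vec b = 0$, and equation (\ref{paraproj}) gives $\dot r \le 0$ at every $r$, ruling out purifiability. Matching to the four bullets: (1) a single nonzero singular $L$ has $\mu \ne 0$ automatically; (2) a single non-singular $L$ with non-orthogonal eigenvectors is precisely $\mu \ne 0$; (3) the lone singular operator in the family contributes $|\mu|^2 > 0$; (4) any family of mutually linearly independent non-singular operators sharing $|1\rangle$ cannot all be multiples of $\sigma_z$, so some $\mu_j \ne 0$. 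The main obstacle is the matrix calculation underlying the displayed identity for $\dot r$; the rest is structural bookkeeping.
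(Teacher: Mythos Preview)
Your argument is correct and reaches the same endpoint as the paper, but by a somewhat different tactic. The paper argues abstractly: purifiability requires $f_M(1)=0$ (the lemma) together with $a_2>0$, the latter ensuring that $f_M$ is strictly decreasing so that $f_M>0$ on $(0,1)$; it then identifies $a_2=0$ with the single-Hermitian-operator case by noting that a rank-one real GKS matrix diagonalizes to one Hermitian Lindblad operator. You instead work directly in the basis adapted to the common eigenvector and compute $\dot r$ along that axis, obtaining $(1-r)\sum_j|\mu_j|^2$ explicitly; this bypasses any appeal to the eigenvalues $a_j$ of $A^S$ or to the monotonicity of $f_M$, and it exhibits a concrete purifying control. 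The exclusion criterion (``all $\mu_j=0$'' in your coordinates versus ``$a_2=0$'' in the paper's) is the same statement in different language, and your case analysis for the four bullets mirrors the paper's, including the linear-independence convention needed to conclude that two or more non-singular operators sharing $|1\rangle$ cannot all be scalar multiples of $\sigma_z$.
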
 

\begin{proof}
We are required to show two things to prove a system is purifiable: $f_M(1)=0$, and $a_2>0$. $a_2 \ge 0$. The latter ensures that $f_M$ is strictly decreasing rather than constant in $r$. When combined with the former condition, this implies that $f_M$ is positive for all $r<1$, and therefore controllable. 

It follows from the lemma that a system is purifiable only if all contributing Lindblad operators share a common eigenvector, or else $f_M(1)$ will be strictly negative. This is only a necessary condition however and not a sufficient one, since the condition implies only that $f_M(1)=0$. We also require that $a_2>0$. So consider the case $a_2=0$. This implies that $a_3$ and $\vec{b}$ are also zero (due to \eqref{bineq}), so that $A$ has only one non-zero entry in its natural basis. We claim that $A$ in this form corresponds to a non-singular operator with orthogonal eigenvectors. It is a rank-one real positive matrix, and therefore can be written $A=\sum_{ij=x,y,z}m_im_j$ for some real 3-vector $\vec{m}$. When one diagonalizes the Lindblad equation however, this results in a single Lindblad operator $L=\sum_{j=x,y,z} m_j \sigma_j$. This operator is Hermitian and traceless, however, so neglecting the trivial zero operator, it is non-singular with orthogonal eigenvectors. 

In other words, as long as the system obeys the terms of the lemma, and does not consist of a single Hermitian operator, the system is purifiable. The first two cases in the theorem cover the remaining possible single-operator cases. The remaining two cases can be seen by noting that two singular operators cannot share eigenvectors, since they have only one (we consider two operators that are multiples of each other to be essentially one process). The third case covers the possibility of one singular operator: it has only one eigenvector, and that eigenvector must be shared with the other non-singular operator. The fourth case in the theorem covers the possibility of no singular operators but more than one non-singular operator. Note that the non-singular operators in the third and fourth cases need not have non-orthogonal eigenvectors. 
\end{proof}

\section{Conclusions}

We have shown that the inter-orbit dynamics of a controlled quantum system can be isolated from the intra-orbit dynamics by projecting onto the set of spectra of the density matrix. If one makes certain assumptions about the controllability of the system along the orbits, the position of the system along the orbit can be viewed as a new control variable, since the intra-orbit dynamics can be made arbitrarily faster than the inter-orbit dynamics. In two dimensions, we have derived a 
 equation describing this inter-orbit dynamics, where the new control is the normalized Bloch vector, and the most general Lindblad system can be described by six real parameters: three describing the symmetric part of the dissipation, and three describing the anti-symmetric part.

We have analyzed the controllability of a general two-dimensional system under Lindblad dissipation, particularly for dissipation with non-zero anti-symmetric part. For systems of this type, there exists a trap, or a subinterval of the state space where each state is reachable from any other, but from which states may not escape. The size of this trap can be calculated analytically for certain simple cases, but in general must be calculated numerically. We have shown how this can be done using the method of Lagrange multipliers, and shown results for a particular generic system. 

Furthermore, we have applied this formalism to categorize the set of purifiable systems. A necessary condition for purifiability is that all Lindblad operators share a common eigenvector. To strengthen this to a sufficient condition, one must eliminate the case of a single Hermitian Lindblad operator.

The immediate direction of future work is to apply this formalism to three and higher dimensional systems. It is well-known that the structure of density matrices is richer and less well-understood than the case for two dimensions \cite{schirmer}. For one, the set of pure states no longer constitutes the boundary of the set, but a (measure-zero) subset of the boundary. Currently we are studying how our formalism translates to higher dimensions and what obstructions are posed by the richer geometry. Furthermore, we would like to know whether the set of purifiable higher-dimensional systems can be categorized as it has been done in this paper. 



\begin{thebibliography}{99}

\bibitem{huang} G.M. Huang, T.J. Tarn, and J.W. Clark, On the controllability of quantum mechanical
systems, \emph{J. Math. Phys.}, vol. 24, 1983, pp. 2608-18.

\bibitem{shapiro} M. Shapiro and P. Brumer, Laser control of product quantum state
populations in unimolecular reactions, \emph{J. Chem. Phys.}, vol. 84, 1986, pp. 4103�04.

\bibitem{tannor1} D.J. Tannor, R. Kosloff and S.A. Rice, Coherent pulse sequence induced
control of selectivity of reactions - Exact quantum-mechanical calculations, \emph{J. Chem. Phys.} vol. 85, 1986, pp. 5805�20.

\bibitem{tannor2} D.J. Tannor and S.A. Rice, Control of selectivity of chemical-reaction
via control of wave packet evolution, \emph{J. Chem. Phys.}, vol. 83, 1985, pp. 5013�18.

\bibitem{ramakrishnaetal} V. Ramakrishna, M.V. Salapaka, M. Dahleh, H. Rabitz and A. Peirce, 
Controllability of molecular systems, \emph{Phys. Rev. A}, 51 (2): 960-966, 1995.

\bibitem{dalessandro}
D. D'Alessandro, \emph{Introduction to Quantum Control and Dynamics}. Chapman \& 
Hall/CRC, 2008.

\bibitem{albertini} F. Albertini and D. D'Alessandro, Notions of controllability for 
bilinear multilevel quantum systems, \emph{IEEE Transactions on Automatic Control}, 
48 (8): 1399-1403, 2003.

\bibitem{rouchon}
P. Rouchon, Quantum systems and control, in \emph{2007 International Conference in 
honor of Claude Lobry}, 325-357, 2008.

\bibitem{altafini2} C. Altafini, Controllability properties for finite dimensional quantum Markovian master equations, \emph{J. Math. Phys.}, vol. 44, no. 6, 2003, pp. 2357-72.
\bibitem{altafini} C. Altafini, Coherent control of open quantum dynamical systems, \emph{Phys. Rev. A}, vol. 70, 2004, 062321.

\bibitem{sugny} D. Sugny, C. Kontz, and H. R. Jauslin, Time-optimal control of a two-level dissipative quantum system, \emph{Phys. Rev. A}, Phys. Rev. A, vol. 76, 2007, 023419.

\bibitem{schirmer} S.G. Schirmer, T. Zhang and J.V. Leahy, Orbits of quantum states and geometry of
Bloch vectors for N-level systems, \emph{J. Phys. A: Math. Gen.}, vol. 37, 2004, pp. 1389-1402.

\bibitem{dirr} G. Dirr, U. Helmke and I. Kurniawan, Lie-semigroup structures for reachabilitiy and control of open quantum systems: Kossakowski-Lindblad generators form Lie wedge to Markovian channels, \emph{Rep. Math. Phys.}, vol. 64, 2009, pp. 93-121.

\bibitem{tannorbartana}
D.J. Tannor and A. Bartana, On the interplay of control fields and spontaneous 
emission in laser cooling, \emph{J. Phys. Chem. A} 103: 10359-10363, 1999.

\bibitem{lindblad}
G. Lindblad, On the generators of quantum dynamical semigroups, \emph{Comm. Math. 
Phys.}, vol. 48, 1976, pp. 119-130.

\bibitem{breuer} H.-P. Breuer and F. Petruccione, \emph{The Theory of Open Quantum 
Systems}, Oxford University Press, 2007.


\end{thebibliography}
\end{document}